\documentclass{IEEEtran}
\usepackage{cite}
\usepackage{amsmath,amssymb,amsfonts}
\usepackage{algorithmic}
\usepackage{graphicx}
\usepackage{textcomp}
\usepackage{xcolor}
\usepackage{mathrsfs}
\usepackage{psfrag}
\usepackage{cancel}
\usepackage{verbatim}
\usepackage{tikz}\usetikzlibrary{arrows.meta,decorations.pathmorphing,angles,quotes,calc}

\makeatletter
\let\NAT@parse\undefined
\makeatother
\usepackage{hyperref}

\def\mc{\mathcal}
\def\mb{\mathbb}

\newtheorem{definition}{Definition}
\newtheorem{theorem}{Theorem}

\newtheorem{lemma}{Lemma}
\newtheorem{remark}{Remark}

\newtheorem{assumption}{Assumption}

\begin{document}
\title{Absolute Stability of Nonlinear Negative Imaginary Systems with Application to Potential Energy Shaping}

\author{Kanghong Shi, \qquad Ian R. Manchester
\thanks{This work was supported by the Australian Research Council (DP230101014).}
\thanks{K. Shi and I. R. Manchester are with the Australian Centre for Robotics, School of Aerospace, Mechanical and Mechatronic Engineering, The University of Sydney, Sydney, NSW 2006, Australia.
 {\tt kanghong.shi@outlook.com}, {\tt ian.manchester@sydney.edu.au}}
}

\maketitle

\thispagestyle{plain}
\pagestyle{plain}

\begin{abstract}
	This paper establishes absolute stability conditions for nonlinear negative imaginary (NI) systems interconnected with static nonlinear feedback.
We first show that the NI property is preserved when the feedback nonlinearity can be expressed as the gradient of a continuously differentiable function, and the composite storage of the resulting system remains positive definite. This condition provides a direct connection between nonlinear static feedback and storage-function shaping along the measured output channels.
Building on this result, conditions are derived for absolute stability of the closed-loop system under mild assumptions.
The linear specialization of the results strictly generalizes prior absolute stability results for linear NI systems, allowing coupled nonlinearities not covered by existing slope-restricted or sector-bounded frameworks.
Finally, the proposed theory is illustrated through a linear example highlighting this generalization and a nonlinear example that shows the utility of the proposed results in potential energy shaping.
\end{abstract}

\begin{IEEEkeywords}
negative imaginary system, absolute stability, Lur'e system, potential energy shaping.
\end{IEEEkeywords}

\section{Introduction}

Negative imaginary (NI) systems theory \cite{lanzon2008stability,petersen2010feedback} provides a robust control framework for mechanical systems with colocated force actuators and position sensors. Roughly speaking, a system is said to be NI if it is stable and the imaginary part of its frequency response is nonpositive for all positive frequencies. While positive real (PR) systems theory applies to systems with relative degree up to one, NI systems can have relative degree zero, one, or two. Under suitable conditions, an NI system can be asymptotically stabilized by applying a strictly NI controller in positive feedback. NI systems theory has found applications in several areas including nanopositioning \cite{bhikkaji2009fast,das2015multivariable} and the control of lightly damped flexible structures \cite{bhikkaji2011negative,rahman2015design}.

Since most engineering systems are inherently nonlinear, the absolute stability problem for NI systems was investigated in \cite{dey2016absolute}. The absolute stability problem studies the conditions under which the interconnection of a dynamical system and a static nonlinearity, known as a Lur'e system \cite{khalil2002nonlinear}, is stable. Classical results such as the circle criterion \cite{zames1966input} and the Popov criterion \cite{popov1961absolute} provide sector-bound conditions for the absolute stability of linear systems. In \cite{dey2016absolute}, absolute stability conditions were derived for linear plants with the strongly strictly negative imaginary (SSNI) property. The results were later extended in \cite{carrasco2017comment}. These works provide a useful framework for analyzing simple nonlinearities such as saturation and dead zones. However, both \cite{dey2016absolute} and \cite{carrasco2017comment} considered only diagonal nonlinearities of the form
\(\phi(y)=\begin{bmatrix}\phi_1(y_1),\ldots,\phi_p(y_p)\end{bmatrix}^\top,\)
which significantly limits their applicability.

The NI systems framework was later extended to nonlinear systems in \cite{ghallab2018extending} through the notion of counterclockwise input-output dynamics \cite{angeli2006systems}. A nonlinear NI system can be characterized as a dissipative system with a supply rate given by the inner product of the input and the time derivative of the output. While stability results have been established for interconnected nonlinear NI systems \cite{ghallab2018extending,shi2021robust}, a general and practical solution to the absolute stability problem has remained elusive. Even in the nonlinear systems scenario, it is still of theoretical and practical significance to investigate the effect of a memoryless, possibly cross-coupled nonlinear feedback on a nonlinear NI system.

In this paper, we investigate the Lur’e interconnection of a nonlinear NI system with a static nonlinear feedback and derive conditions under which the NI property is preserved, as well as conditions ensuring absolute stability. We show that a nonlinear static feedback can effectively modify the system’s storage function along its measured output channels. From this perspective, we view the nonlinearity as the gradient of a scalar function, through which the key conditions are obtained. For physical systems such as mechanical plants, this corresponds to shaping the potential energy of the system. Based on the same principle, we also establish absolute stability conditions when the plant has certain strictness. The corresponding linear specialization of the result is also derived and shown to strictly generalize the results in \cite{dey2016absolute} and \cite{carrasco2017comment}. This generalization is illustrated with a linear numerical example, followed by a nonlinear two-pendulum example where the system's potential energy is shaped using the proposed results to achieve certain control objectives.

The contribution of this work is not only in providing absolute stability conditions for nonlinear NI systems, but also in introducing a constructive method to modify a nonlinear NI system's storage function. Given any desired modification in the storage along the measured outputs, our results show that it can be realized by applying the gradient of that modification as a static nonlinear feedback. This has direct applications to control tasks such as compliant control, where the goal is to adjust stiffness between joints and the environment or between multiple joints. In comparison with prior results, the improved generality and applicability of the proposed results are also evident from their physical interpretation. The results in \cite{dey2016absolute} and \cite{carrasco2017comment} considered only diagonal nonlinearities, which do not allow modification of stiffness between measured coordinates. Moreover, \cite{dey2016absolute} permits only softening each coordinate, while \cite{carrasco2017comment} allows either softening or stiffening all coordinates, but not independently. In contrast, our results can deal with cross-coupling and sign-indefinite storage modifications. Recent nonlinear NI results with nonlinear feedthrough \cite{chen2024nonlinear,chen2024stability} generalized the interconnection structures in \cite{shi2021robust,shi2023output}, but \cite{chen2024nonlinear} is limited to single-input single-output systems, and \cite{chen2024stability} assumes uncoupled, diagonal nonlinearities. Therefore, these existing results are special cases of the results presented in the present paper.

The remainder of this paper is organized as follows. Section~\ref{sec:pre} reviews preliminaries on nonlinear NI systems. Section~\ref{sec:preservation} presents NI preservation under suitable static nonlinear feedback and its physical interpretation. Section~\ref{sec:absolute stability} develops absolute stability conditions for nonlinear NI systems, specializes them to the linear case, and compares them with existing linear results. Section~\ref{sec:examples} illustrates the proposed theory with two examples: a linear example demonstrating generalization over previous linear results, and a nonlinear pendulum example demonstrating the utility of the proposed results in potential energy shaping. Section~\ref{sec:conclusion} concludes the paper.

Notation: The notation in this paper is standard. $\mathbb R$ denotes the set of real numbers. $\mb N_+$ denotes the set of positive integers. $\mathbb R^{m\times n}$ denotes the space of real matrices of dimension $m\times n$. $A^\top$ and $A^{-1}$ denote the transpose and the inverse of a matrix $A$, respectively. $\|x\|$ denotes the standard Euclidean norm of a vector $x$. For a real symmetric or complex Hermitian matrix $P$, $P>0\ (P\geq 0)$ denotes the positive (semi-)definiteness of a matrix $P$ and $P<0\ (P\leq 0)$ denotes the negative (semi-)definiteness of a matrix $P$. A function $V: \mb R^n \to \mb R$ is called positive definite if $V(0)=0$ and $V(x)>0$ for all $x\neq 0$. A function $V: \mb R^n \to \mb R$ is said to be of class $C^1$ if it is continuously differentiable. $\nabla$ denotes the gradient operator. $\mathrm{diag}(a_1,\cdots,a_n)$ denotes a diagonal matrix with the elements $a_1,\cdots,a_n$ on its diagonal.

\section{Preliminaries}\label{sec:pre}
Consider the following general nonlinear system:
\begin{subequations}\label{eq:general nonlinear system}
	\begin{align}
    \dot x(t)=&\ f(x(t),u(t)),\label{eq:state equation of nonlinear NI}\\
    y(t)=&\ h(x(t)),
    \label{eq:output equation of nonlinear NI}
\end{align}
\end{subequations}
where $x(t)\in \mathbb R^{n}$ is the state, $u(t)\in \mathbb R^p$ is a locally integrable input, and $y(t)\in \mathbb R^p$ ($p\leq n$) is the output, $f:\mathbb R^n\times \mathbb R^p \to \mathbb R^n$ is a Lipschitz continuous function and $h:\mathbb R^n \to \mathbb R^p$ is a class $C^1$ function. Without loss of generality, suppose $f(0,0)=0$ and $h(0)=0$.
\begin{definition}[nonlinear NI systems]\label{def:nonlinear NI}\cite{ghallab2018extending,shi2023output}
The system (\ref{eq:general nonlinear system}) is said to be a negative imaginary (NI) system if there exists a positive definite storage function $V:\mathbb R^n\to \mathbb R$ of class $C^1$ such that for any locally integrable input $u$ and any solution $x$ to (\ref{eq:general nonlinear system}),
\begin{equation}\label{eq:NI inequality}
    \dot V(x(t))\leq u(t)^{\top}\dot y(t),
\end{equation}
for all $t\geq 0$.
\end{definition}

\begin{definition}[nonlinear OSNI systems]\label{def:nonlinear OSNI}\cite{shi2021robust,shi2023output} (also see e.g., \cite{bhowmickoutput} for linear OSNI systems)
The system (\ref{eq:general nonlinear system}) is said to be an output strictly negative imaginary (OSNI) system if there exists a positive definite storage function $V:\mathbb R^n\to \mathbb R$ of class $C^1$ and a scalar $\epsilon>0$ such that for any locally integrable input $u$ and any solution $x$ to (\ref{eq:general nonlinear system}),
\begin{equation}\label{eq:OSNI inequality}
    \dot V(x(t))\leq u(t)^{\top}\dot y(t)-\epsilon\|\dot y\|^2,
\end{equation}
for all $t\geq 0$.
\end{definition}

Consider a Hamiltonian system of the following form
\begin{subequations}\label{eq:Hamiltonian}
	\begin{align}
	\dot x =&\ [J(x)-R(x)]\left(\nabla H(x)-\nabla C(x)^{\top} u\right),\label{eq:pH state}\\
		y =&\ C(x),\label{eq:pH output}
	\end{align}
\end{subequations}
where $x\in \mb R^n$, $u,y\in \mb R^p$ are the state, input and output of the system, respectively. Here, $J:\mb R^n\to \mb R^{n\times n}$ is skew-symmetric; i.e., $J(x) = -J(x)^{\top}$. Also, $R:\mb R^n\to \mb R^{n\times n}$ is symmetric; i.e., $R(x) = R(x)^{\top}$. Here, $H:\mb R^n\to \mb R$ and $C:\mb R^n \to \mb R^p$ are class $C^1$ functions with $H(0)=0$ and $C(0)=0$.

\begin{lemma}[NI input-output Hamiltonian system]\cite{van2011positive,van2016interconnections}
	Consider the system given by (\ref{eq:Hamiltonian}). If $H(x)$ is a positive definite function and $R(x)\geq 0$ for all $x\in \mb R^n$, then the system (\ref{eq:Hamiltonian}) is a nonlinear NI system with storage function $H(x)$.
\end{lemma}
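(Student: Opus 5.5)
We verify Definition~\ref{def:nonlinear NI} directly with storage function $V=H$, computing $\dot H$ along trajectories of (\ref{eq:Hamiltonian}). By hypothesis $H$ is positive definite and of class $C^1$, so it is an admissible storage function. It remains to show that $\dot H(x(t))\leq u(t)^\top \dot y(t)$ for every locally integrable input and every solution.

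\textbf{Step 1: the output derivative.} Write $\nabla C(x)\in\mb R^{p\times n}$ for the Jacobian of $C$, consistent with the term $\nabla C(x)^\top u$ in (\ref{eq:pH state}). Also abbreviate $z:=\nabla H(x)-\nabla C(x)^\top u\in\mb R^n$. From (\ref{eq:pH output}), the chain rule gives
\begin{equation*}
\dot y=\nabla C(x)\dot x=\nabla C(x)[J(x)-R(x)]z .
\end{equation*}
Transposing and multiplying by $u$ yields
\begin{equation*}
u^\top\dot y=\left(\nabla C(x)^\top u\right)^\top [J(x)-R(x)]z .
\end{equation*}

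\textbf{Step 2: the storage derivative.} Similarly,
\begin{equation*}
\dot H=\nabla H(x)^\top [J(x)-R(x)] z .
\end{equation*}

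\textbf{Step 3: combine the two.} Subtracting the two expressions gives
\begin{equation*}
\dot H-u^\top\dot y=\left(\nabla H(x)-\nabla C(x)^\top u\right)^\top[J(x)-R(x)]z=z^\top[J(x)-R(x)]z .
\end{equation*}
Skew-symmetry of $J(x)$ gives $z^\top J(x) z=0$. Since $R(x)\geq 0$, this leaves
\begin{equation*}
\dot H-u^\top\dot y=-z^\top R(x)z\leq 0,
\end{equation*}
which is exactly inequality (\ref{eq:NI inequality}).

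\textbf{Step 4: the time-regularity issue.} The algebra above is routine. The only point needing care is that $u$ is merely locally integrable, so $x$ is absolutely continuous and $\dot x$, $\dot y$ exist only almost everywhere. Since $H$ and $C$ are $C^1$, the chain rule holds for almost every $t$, so the inequality holds for almost all $t$. This is the sense in which (\ref{eq:NI inequality}) must be read for such inputs. Equivalently, the inequality holds in the integrated form $H(x(t))-H(x(0))\leq\int_0^t u^\top\dot y\,d\tau$, which completes the proof.
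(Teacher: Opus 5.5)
Your proof is correct. The paper gives no proof of its own for this lemma (it is quoted from the cited works of van der Schaft), but your argument is the standard one, and it is the same computation the paper carries out in the proof of Theorem~\ref{thm:absolute stability Hamiltonian} with $u=\phi(y)$: collecting $\dot H-u^\top\dot y$ into $z^\top[J(x)-R(x)]z=-z^\top R(x)z$ with $z=\nabla H(x)-\nabla C(x)^\top u$. Your Step 4, which reads the inequality almost everywhere (or in integrated form) for merely locally integrable inputs, is a sound refinement the paper does not address.
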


\section{NI preservation under static nonlinearities}\label{sec:preservation}
We consider a Lur'e structure of an NI system $\mc H$ as shown in Fig.~\ref{fig:interconnection}, where its output is fed back to its input via a static nonlinearity. We investigate the conditions on the nonlinearity under which the resulting system is still NI. Also, we provide the physical interpretation of such a static nonlinear feedback control.
\begin{figure}[h!]
\centering
\psfrag{in_1}{$u$}
\psfrag{y_1}{$y$}
\psfrag{u_2}{}
\psfrag{y_2}{}
\psfrag{r}{\hspace{0.35cm}$v$}
\psfrag{plant}{$\mc H$}
\psfrag{con}{\hspace{0.1cm}$\phi(y)$}
\includegraphics[width=8.5cm]{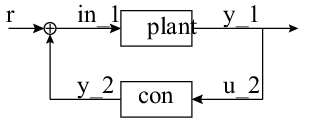}
\caption{Closed-loop interconnection of the system $\mc H$ given in (\ref{eq:general nonlinear system}) and a memoryless nonlinearity $\phi(y)$.}
\label{fig:interconnection}
\end{figure}

\subsection{NI preservation}
Consider a system $\mc H$ with the state-space model (\ref{eq:general nonlinear system}), whose input is given by
\begin{equation}\label{eq:control input}
	u = \phi(y)+v,
\end{equation}
where $\phi:\mb R^p\to \mb R^p$ is a time-invariant, memoryless nonlinear function with $\phi(0)=0$, and $v\in\mb R^p$ is a new input. The resulting system can be represented by the following equations.
\begin{subequations}\label{eq:CL with input v}
	\begin{align}
    \dot x=&\ f(x,\phi(h(x))+v),\\
    y=&\ h(x).
\end{align}
\end{subequations}

\begin{theorem}\label{theorem:NI preservation}
	Consider the system $\mc H$ with the state-space model (\ref{eq:general nonlinear system}), and  input (\ref{eq:control input}). Suppose the system $\mc H$ is NI with storage function $V(x)$ satisfying (\ref{eq:NI inequality}). Also, suppose there exists a continuously differentiable function $F:\mb R^p\to \mb R$ such that $\nabla F(y) = \phi(y)$ and $W(x):= V(x)-F(h(x))$ is positive definite. Then, the resulting closed-loop system (\ref{eq:CL with input v}) as shown in Fig.~\ref{fig:interconnection} is also NI with storage function $W(x)$.
\end{theorem}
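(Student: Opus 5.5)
The plan is a direct dissipation-inequality computation along closed-loop trajectories, using the NI inequality (\ref{eq:NI inequality}) of $\mc H$ together with the chain rule applied to $F(h(x))$.

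\textbf{Step 1: set up the closed loop.} Take any locally integrable new input $v$ and any solution $x$ of (\ref{eq:CL with input v}). Along this solution, $x$ is also a solution of (\ref{eq:general nonlinear system}) driven by the input $u=\phi(y)+v$ with $y=h(x)$. This $u$ is locally integrable, because $\phi(h(x(t)))$ is continuous in $t$ and $v$ is locally integrable. Continuity of $\phi$ follows from $\phi=\nabla F$ with $F\in C^1$. Hence (\ref{eq:NI inequality}) applies:
\[
\dot V(x)\le (\phi(y)+v)^\top \dot y .
\]

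\textbf{Step 2: differentiate the composite term.} Since $F$ and $h$ are $C^1$ and $x$ is absolutely continuous, $t\mapsto F(h(x(t)))$ is differentiable wherever $x$ is, and the chain rule gives
\[
\tfrac{d}{dt}F(y)=\nabla F(y)^\top \dot y=\phi(y)^\top \dot y .
\]

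\textbf{Step 3: combine.} Subtracting Step 2 from Step 1 gives
\[
\dot W(x)=\dot V(x)-\phi(y)^\top\dot y\le v^\top \dot y .
\]
This is exactly (\ref{eq:NI inequality}) for the system (\ref{eq:CL with input v}) with input $v$ and storage $W$.

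\textbf{Step 4: check the storage-function requirements.} $W$ is $C^1$ as the sum of a $C^1$ function and a composition of $C^1$ maps, and it is positive definite by hypothesis. Note that $W(0)=V(0)-F(0)$, so the positive-definiteness assumption implicitly normalizes $F(0)=0$; if needed, I would add the constant $F(0)$ back, which does not affect gradients. Steps 3 and 4 together give Definition~\ref{def:nonlinear NI} for (\ref{eq:CL with input v}), which proves the theorem.

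\textbf{Expected obstacle.} No step is genuinely hard. The only delicate point is regularity: the inequality (\ref{eq:NI inequality}) is meant almost everywhere, or for all $t$ when $v$ is continuous, and the argument must make sure the chain-rule identity holds at the same time instants. Both $V(x(t))$ and $F(h(x(t)))$ are differentiable wherever $x(t)$ is, so the two relations hold simultaneously and their difference is valid.
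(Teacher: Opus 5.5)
Your proposal is correct and takes the same approach as the paper. Both apply the NI inequality with $u=\phi(y)+v$ and subtract $\tfrac{d}{dt}F(h(x))=\phi(y)^\top\dot y$ to obtain $\dot W\le v^\top\dot y$. Your remarks on local integrability of $u$, the chain rule, $C^1$ regularity of $W$, and the implicit normalization $F(0)=0$ spell out details the paper leaves unstated.
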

\begin{IEEEproof}
	We show that the NI condition (\ref{eq:NI inequality}) is satisfied for the system (\ref{eq:CL with input v}) with storage function $V(x)-F(h(x))$ and the new input $v$ in the following.
	\begin{align}
		\dot V(x)-\dot F(h(x))-v^{\top}\dot y =&\ \dot V(x)-\phi(y)^{\top}\dot y-v^{\top}\dot y\notag\\
		 =&\ \dot V(x)-u^{\top}\dot y\leq 0.
	\end{align}
\end{IEEEproof}
\begin{remark}
	The result in Theorem \ref{theorem:NI preservation} shows that the storage function of an NI system can be modified via a suitable static nonlinear feedback control, in the part that depends on the system output. Here, $F(y)$ can be positive definite, negative definite, or sign indefinite, as long as the new storage $W(x)$ is positive definite.
\end{remark}

\subsection{Physical interpretation -- potential shaping}
A typical example of NI systems is a mechanical system with colocated force actuators and position sensors. For such systems, the model (\ref{eq:general nonlinear system}) can be written in a more detailed form:
\begin{subequations}
	\begin{align}
		\dot x_1 =&\ x_2,\\
		\dot x_2 =&\ f_2(x_1,x_2,u),\\
		y =&\ h_1(x_1),
	\end{align}
\end{subequations}
where $x_1,x_2\in \mb R^{\frac{n}{2}}$ ($\frac{n}{2}\in \mb N_+$) denotes the displacements and velocities, respectively. In this case, the output $y$ measures a full set or a subset of the displacements and $\dim(y)=p\leq \frac{n}{2}$.

The static nonlinearity $\phi(y)=\phi(h_1(x_1))$ reshapes the potential energy of the system. Specifically, suppose $V(x) = V_p(x_1)+V_k(x_2)$, where $V_p(x_1)$ and $V_k(x_2)$ are the potential and kinetic energy of the original system $\mc H$, respectively. Then the resulting system has the same kinetic energy and a modified potential energy $\widehat V_p(x_1)=V_p(x_1)-F(h_1(x_1))$, where $F$ satisfies the same conditions required in Theorem \ref{theorem:NI preservation} for this case; i.e., $\nabla F(y)=\phi(y)$ and $\widehat V_p(x_1)$ is positive definite.

For a mechanical plant $\mc H$, the feedback acts as if virtual springs are introduced between the measured coordinates. By appropriate choice of $F$, these springs can stiffen, soften, or couple joints in ways not present in the physical plant, while preserving the NI property. This type of potential shaping is directly related to compliant control, where feedback modifies the apparent stiffness of the system. Moreover, since the closed-loop remains NI, the shaped plant can still be treated within the NI control framework for subsequent design.

\section{Absolute stability for NI systems}\label{sec:absolute stability}
In this section, we consider the absolute stability problem for nonlinear NI systems. That is, we provide conditions under which the interconnection of the nonlinear NI system $\mc H$ and the static nonlinearity $\phi(y)$ as shown in Fig.~\ref{fig:interconnection} with $v=0$ is asymptotically stable. We first consider NI systems in a general nonlinear system structure. Then we specialize to input-affine nonlinear NI systems in an input-output Hamiltonian form. Existing absolute stability results for linear NI systems are recovered as special cases of the proposed framework.

\subsection{Absolute stability for NI systems in a general nonlinear system form}
In order to achieve closed-loop asymptotic stability, we require certain strictness in the plant. We consider nonlinear OSNI systems in this paper. Absolute stability for NI systems with other types of strictness can be analyzed in a similar manner.

We use the following observability-type assumption on the output map to exclude
hidden internal motion that is invisible from the output.
\begin{assumption}\label{assumption1}
For any trajectory of the system in the interval $[t_a,t_b]$, if the system output $y=h(x)$ remains constant, then the system state $x(t)$ also remains constant.
\end{assumption}

In this paper, absolute stability refers to asymptotic stability of the
closed-loop origin for the class of static nonlinearities satisfying the stated
conditions.

\begin{theorem}\label{thm:absolute stability}
Consider a nonlinear OSNI system $\mc H$ given by (\ref{eq:general nonlinear system}) with storage function $V(x)$. Also, consider a memoryless nonlinear feedback
\begin{equation}
	u = \phi(y), \qquad \phi(0)=0
\end{equation}
as shown in Fig.~\ref{fig:interconnection} with $v=0$. Suppose Assumption \ref{assumption1} is satisfied and $f(x,\phi(h(x)))\neq 0$ for all $x\neq 0$. If there exists a continuously differentiable function $F:\mb R^p\to \mb R$ such that $\nabla F(y) = \phi(y)$ and $W(x):=V(x)-F(h(x))$ is positive definite. Then the closed-loop system is absolutely stable with the Lyapunov function $W(x)$.
\end{theorem}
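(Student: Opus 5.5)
We take $W(x)=V(x)-F(h(x))$ as the Lyapunov function candidate and follow the argument of Theorem~\ref{theorem:NI preservation}, now keeping the output-strict term. The hypotheses already make $W$ positive definite and $C^1$. Along closed-loop trajectories with $u=\phi(y)$, $\nabla F=\phi$ and the OSNI inequality (\ref{eq:OSNI inequality}) give
\begin{equation*}
\dot W(x)=\dot V(x)-\phi(y)^{\top}\dot y\le u^{\top}\dot y-\epsilon\|\dot y\|^2-\phi(y)^{\top}\dot y=-\epsilon\|\dot y\|^2\le 0 .
\end{equation*}
Hence $W$ is nonincreasing along solutions, and the origin is Lyapunov stable.

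For attractivity we apply LaSalle's invariance principle. The closed-loop vector field $x\mapsto f(x,\phi(h(x)))$ is time invariant. We need it to be locally Lipschitz (or at least continuous with unique solutions), which I will assume holds under the standing assumptions on $f$, $h$ and $\phi$. Since $W$ is positive definite and continuous, for small $c>0$ the connected component of the sublevel set $\{W\le c\}$ containing the origin is compact and forward invariant. Every solution starting in it is therefore bounded, and it converges to the largest invariant set $M$ contained in $\{x:\dot W(x)=0\}$.

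We characterise $M$ by taking a solution $x(t)$ that stays in $M$ for all $t$. Because $\dot W\le-\epsilon\|\dot y\|^2$, we get $\dot y\equiv 0$, so $y(t)$ is constant on every interval $[t_a,t_b]$. Assumption~\ref{assumption1} then gives $x(t)\equiv\bar x$ constant, so $\dot x=f(\bar x,\phi(h(\bar x)))=0$, i.e. $\bar x$ is an equilibrium. By hypothesis $f(x,\phi(h(x)))\ne0$ for $x\ne0$, so $\bar x=0$ and $M=\{0\}$. This yields local asymptotic stability with Lyapunov function $W$. If $W$ is in addition radially unbounded, the sublevel sets are compact globally and the same argument gives global asymptotic stability.

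The main obstacle is this LaSalle step. The inequality $\dot W\le-\epsilon\|\dot y\|^2$ constrains $\dot y$ only, not $\dot W$ as an explicit function of $x$. To make the set $\{\dot W=0\}$ meaningful, we can either work with the set where $\frac{d}{dt}h(x)=\nabla h(x)f(x,\phi(h(x)))=0$, or use the integral form of LaSalle on $\omega$-limit sets. With the integral form, $\int_0^\infty\|\dot y\|^2\,dt<\infty$ and invariance of the $\omega$-limit set force $\dot y\equiv0$ on it. We also need care in stating whether the conclusion is local or global, and this depends on the properness of $W$.
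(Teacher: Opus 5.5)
Your proposal is correct and follows essentially the same route as the paper. Both use the Lyapunov function $W$, the bound $\dot W\le-\epsilon\|\dot y\|^2$, and the chain $\dot W\equiv0\Rightarrow\dot y\equiv0\Rightarrow\dot x\equiv0\Rightarrow x=0$ via Assumption~\ref{assumption1} and the no-other-equilibria hypothesis; the paper states this informally, and you make it rigorous through LaSalle on a compact sublevel set. The obstacle you flag is not a real one: in the autonomous closed loop $\dot W(x)=\nabla W(x)^\top f(x,\phi(h(x)))$ is an explicit function of $x$, so $\{\dot W=0\}$ is well defined and the OSNI inequality gives $\dot y=0$ on it directly.
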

\begin{IEEEproof}
Taking the time derivative of the Lyapunov function yields
	\begin{align}
		\dot W(x)=&\ \dot V(x)-\dot F(h(x)) = \dot V(x)-\nabla F(y)^\top\dot y \notag\\
		\le &\ u^\top\dot y-\epsilon\|\dot y\|^2- u^\top\dot y
= -\epsilon\|\dot y\|^2 \le 0.
	\end{align}
Therefore, the closed-loop system is Lyapunov stable. Now we investigate the case where $\dot W(x)\equiv 0$. In this case, we have that $\dot y\equiv 0$, which implies $\dot x \equiv 0$ according to Assumption \ref{assumption1}. Therefore, we have $f(x,\phi(h(x)))\equiv 0$, which implies $x=0$. Hence, the closed-loop system is asymptotically stable.
\end{IEEEproof}

\subsection{Absolute stability for NI systems in an input-output Hamiltonian form}
Physical models with NI properties can usually be written into the input-output Hamiltonian form (\ref{eq:Hamiltonian}). In the following, we provide an absolute stability result for the system (\ref{eq:Hamiltonian}) with $R(x)$ assumed to be positive definite for all $x\in \mb R^n$. In this case, Assumption \ref{assumption1} is no longer required.

\begin{theorem}\label{thm:absolute stability Hamiltonian}
Consider the interconnection of the system $\mc H$ given by (\ref{eq:Hamiltonian}) with the memoryless nonlinearity
\begin{equation}
	u = \phi(y), \qquad \phi(0)=0
\end{equation}
as shown in Fig.~\ref{fig:interconnection} (with $v=0$). Suppose $H(x)$ is a positive definite function and $R(x)$ is a positive definite matrix for all $x\in \mb R^n$. Suppose there exists a continuously differentiable function $F:\mb R^p\to \mb R$ such that $\nabla F(y) = \phi(y)$ and $W(x):=H(x)-F(C(x))$ is positive definite with $\nabla W(x)\neq 0$ for all $x\neq 0$. Then the closed-loop system is absolutely stable with the Lyapunov function $W(x)$.
\end{theorem}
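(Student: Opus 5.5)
The plan is to use $W(x)=H(x)-F(C(x))$ as a Lyapunov function and show $\dot W<0$ away from the origin. The key fact is that the closed loop is itself an input-output Hamiltonian system with Hamiltonian $W$. With $v=0$ and $u=\phi(y)=\nabla F(C(x))$, the chain rule gives
\begin{equation*}
\nabla W(x)=\nabla H(x)-\nabla C(x)^\top \nabla F(C(x))=\nabla H(x)-\nabla C(x)^\top u .
\end{equation*}
Here $\nabla C(x)^\top$ is read as the $n\times p$ matrix, consistent with (\ref{eq:pH state}). Substituting into (\ref{eq:pH state}) turns the closed-loop dynamics into
\begin{equation*}
\dot x=[J(x)-R(x)]\nabla W(x).
\end{equation*}
This is the same identity that underlies Theorem~\ref{theorem:NI preservation}, now written directly in Hamiltonian coordinates.

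Next I compute the derivative of $W$ along trajectories:
\begin{equation*}
\dot W=\nabla W(x)^\top[J(x)-R(x)]\nabla W(x)=-\nabla W(x)^\top R(x)\nabla W(x),
\end{equation*}
since the quadratic form of the skew-symmetric $J(x)$ vanishes. Because $R(x)>0$ for every $x$, we get $\dot W\le 0$, with equality only when $\nabla W(x)=0$. The hypothesis $\nabla W(x)\neq 0$ for $x\neq 0$ then makes $\dot W<0$ for all $x\neq 0$. Together with positive definiteness of $W$, Lyapunov's direct theorem gives asymptotic stability of the origin, i.e.\ absolute stability for every $\phi$ admitting such an $F$.

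The strict negativity of $\dot W$ is exactly why Assumption~\ref{assumption1} and LaSalle's invariance principle, used in Theorem~\ref{thm:absolute stability}, are unnecessary here. The only care point is that $\nabla W(0)=0$ automatically, since $0$ is a minimizer of the positive definite $C^1$ function $W$; the origin is therefore an equilibrium and the condition $\nabla W\neq0$ is only needed off the origin.

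The main obstacle I expect is regularity, not the algebra. Existence and uniqueness of closed-loop solutions requires $[J-R](\nabla H-\nabla C^\top\phi(C))$ to be, say, locally Lipschitz; I would state this as an implicit standing assumption, as in (\ref{eq:general nonlinear system}). Also, $W$ is only required to be positive definite, not radially unbounded, so the conclusion is local asymptotic stability, which matches the notion of absolute stability defined before Theorem~\ref{thm:absolute stability}.
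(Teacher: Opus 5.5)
Your proposal is correct and follows essentially the same route as the paper: write $\nabla W=\nabla H-\nabla C^\top\phi(C(x))$, substitute the closed-loop dynamics, and use skew-symmetry of $J$ to obtain $\dot W=-\nabla W^\top R\nabla W$, which is negative for $x\neq 0$ because $R>0$ and $\nabla W\neq 0$. Your extra observations are accurate points the paper leaves implicit: the closed loop can be written as $\dot x=[J-R]\nabla W$, existence of solutions needs a regularity (e.g.\ local Lipschitz) assumption, and without radial unboundedness of $W$ the conclusion is only local asymptotic stability.
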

\begin{IEEEproof}
	We apply Lyapunov's stability theorem with candidate Lyapunov function $W(x)$. We have that
	\begin{align}
		\dot W(x)=&\ \dot H(x)-\dot F(C(x))\notag\\
		=&\ \nabla H(x)^{\top} \dot x-\nabla F(y) ^{\top}\nabla C(x)\dot x\notag\\
		=&\ \nabla H(x)^{\top}\dot x-\phi(y)^{\top}\nabla C(x)\dot x\notag\\
		=& \left(\nabla H(x)^{\top} - \phi(y)^{\top}\nabla C(x)\right)\dot x\notag\\
		=& \left(\nabla H(x)^{\top} - \phi(y)^{\top}\nabla C(x)\right)[J(x)-R(x)]\notag\\
		&\quad \left(\nabla H(x)-\nabla C(x)^{\top}\phi(y)\right)\notag\\
		=& -\left(\nabla H(x)^{\top} - \phi(y)^{\top}\nabla C(x)\right)R(x)\notag\\
		&\quad \left(\nabla H(x)-\nabla C(x)^{\top}\phi(y)\right)\notag\\
        =&\ -\nabla W(x)^\top R(x)\nabla W(x).\label{eq:dot lf}
	\end{align}
Because $R(x)>0$ and $\nabla W(x)\neq 0$ for all $x\neq 0$, it follows that $\dot W(x)<0$ for all $x\neq 0$. Hence the origin is asymptotically stable.
\end{IEEEproof}

\begin{remark}
    The potential energy shaping results in Section \ref{sec:preservation} and the absolute stability results in the present section naturally integrate with broader NI control architectures, including those incorporating dynamic feedback. To illustrate this viewpoint, consider the augmented configuration in Fig. \ref{fig:augmented interconnection}, which can be regarded as the interconnection of the Lur'e structure in Fig.~\ref{fig:interconnection} and an additional dynamic NI system $\mc H_2$.
    
    In such a setting, the static nonlinearity $\phi(y_1)$ shapes the potential energy of the plant $\mc H_1$, while a dynamic controller $\mc H_2$ may be used to inject damping. For example, the potential shaping results can be applied along with the learning-based NI control results in \cite{shi2025negative}. Instead of learning an NI neural ODE (NINODE) that stabilizes a plant as in \cite{shi2025negative}, one can consider learning both a static nonlinearity $\phi(y_1)$ to shape the resulting system's potential energy and a NINODE $\mc H_2$ to provide sufficient damping. Applications of such a result arise naturally in tasks such as impedance control.
    
This framework also provides a way to interpret situations where an input feedthrough term exists in the controller. Suppose both $\mc H_1$ and $\mc H_2$ have no input feedthrough terms, then the combination of $\mc H_2$ and $\phi(y_1)$ can be regarded as a system with input feedthrough. By applying the proposed results, the interconnection of $\mc H_1$ and $\phi(y_1)$ provides an augmented system of the form (\ref{eq:CL with input v}). Then, the system in Fig.~\ref{fig:augmented interconnection} becomes the interconnection of two NI systems without input feedthrough terms, for which stability analysis is usually more straightforward. Note that current nonlinear NI control literature can deal with input feedthrough (see e.g., \cite{shi2023output,chen2024nonlinear,chen2024stability}), but is limited to diagonal nonlinearities. The above-mentioned extensions will follow as a direct application of the proposed results and hence will not be discussed in detail, as they are not the main focus of the present article.
\end{remark}
\begin{figure}[h!]
\centering
\psfrag{in_1}{$u_1$}
\psfrag{y_1}{$y_1$}
\psfrag{u_2}{$u_2$}
\psfrag{y_2}{$y_2$}
\psfrag{r}{\hspace{0.35cm}$v$}
\psfrag{plant}{$\mc H_1$}
\psfrag{phi}{\hspace{0.1cm}$\phi(y_1)$}
\psfrag{con}{\hspace{0.1cm}$\mc H_2$}
\includegraphics[width=8.5cm]{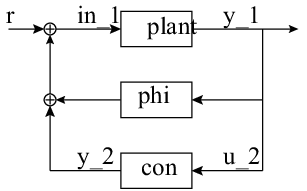}
\caption{Closed-loop interconnection of the system $\mc H_1$, a memoryless nonlinearity $\phi(y)$ and a system $\mc H_2$.}
\label{fig:augmented interconnection}
\end{figure}

\subsection{Comparison to the linear result}
The absolute stability problem is considered for linear NI systems in \cite{dey2016absolute}. In \cite{dey2016absolute}, conditions are provided under which the closed-loop interconnection of a strongly strictly negative imaginary (SSNI) system and a diagonal, memoryless, slope-restricted nonlinearity is absolutely stable. However, this is only a special case of our results even in the linear case as we allow for cross-coupling between channels in the static nonlinearity. We provide a more general linear result in this subsection. First, we present the results in \cite{dey2016absolute} for comparison.

Consider a linear system with a proper, real-rational, square transfer function $G(s)$ and the minimal realization $(A,B,C)$, whose input is given by a memoryless nonlinearity $\phi(y)$:
\begin{subequations}\label{eq:minimal linear system lure}
	\begin{align}
		\dot x =&\ Ax+Bu,\\
		y =&\ Cx,\\
		u =&\ \phi(y),
	\end{align}
\end{subequations}
where $x\in \mb R^n$, $u,y\in \mb R^p$ are the state, input and output. Here, $A\in \mb R^{n\times n}$, $B\in \mb R^{n\times p}$ and $C\in \mb R^{p\times n}$.

\begin{theorem}[{\cite[Thm.~9]{dey2016absolute}}]\label{thm:dey}
Consider the positive–feedback interconnection of a
minimal linear system $(A,B,C)$ with transfer function 
$G(s)$ and a diagonal, memoryless, 
slope-restricted nonlinearity $\phi(y)
   = [\phi_1(y_1),\cdots,\phi_p(y_p)]^\top$
satisfying
$0\le \phi_i'(y_i)<\mu_i$, $\phi_i(0)=0$, for all $i=1,\dots,p$.
Let $M=\mathrm{diag}(\mu_1,\ldots,\mu_p)>0$ denote the matrix of maximum slope bounds. Then the closed-loop system (\ref{eq:minimal linear system lure}) is absolutely stable if there exists a matrix $Y=Y^\top>0$ such that
\begin{equation}\label{eq:SSNI LMI}
   AY+YA^\top<0,\qquad B=-AYC^\top, 
\end{equation}
and the condition
\begin{equation}
   M^{-1}-G(0)>0
\end{equation}
holds.
\end{theorem}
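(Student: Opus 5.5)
Theorem~\ref{thm:dey} is quoted from \cite{dey2016absolute}. Rather than repeat the original proof, I would derive it as a special case of Theorem~\ref{thm:absolute stability Hamiltonian}, or of Theorem~\ref{thm:absolute stability} if the Hamiltonian route fails.

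\textbf{Step 1: rewrite the plant.} The conditions $AY+YA^\top<0$ and $B=-AYC^\top$ let the system be put in input-output Hamiltonian form. With $Y>0$, set $P:=Y^{-1}$ and $H(x)=\tfrac12 x^\top P x$. Then
\[
\dot x = AYPx + Bu = AY\left(Px - C^\top u\right).
\]
Decompose $AY = J - R$ with $J=\tfrac12(AY-YA^\top)$ and $R=-\tfrac12(AY+YA^\top)>0$. This matches (\ref{eq:Hamiltonian}) with $\nabla H = Px$, $C(x)=Cx$ and constant $R>0$.

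\textbf{Step 2: build $F$ and $W$.} Since $\phi$ is diagonal, take $F(y)=\sum_i\int_0^{y_i}\phi_i(s)\,ds$, so that $\nabla F=\phi$. Let $W(x)=\tfrac12 x^\top Px - F(Cx)$. The slope bound gives $\phi_i(s)s\le \mu_i s^2$, hence $F(y)\le \tfrac12 y^\top M y$. The strictness needed later comes from the strict inequality $\phi_i'<\mu_i$ together with $M^{-1}-G(0)>0$.

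\textbf{Step 3: positivity of $W$.} Note $G(0)=-CA^{-1}B=CYC^\top$. By a Schur complement, $M^{-1}-CYC^\top>0$ is equivalent to $P-C^\top MC>0$. Therefore
\[
W(x)\ge \tfrac12 x^\top(P-C^\top MC)x>0 \quad \text{for } x\ne 0,
\]
so $W$ is positive definite.

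\textbf{Step 4: nonvanishing gradient.} We have $\nabla W(x)=Px-C^\top\phi(Cx)$. Suppose $\nabla W(x)=0$. Then $x=YC^\top\phi(y)$, so $y=CYC^\top\phi(y)=G(0)\phi(y)$. Taking the inner product with $\phi(y)$ gives
\[
\phi^\top y=\phi^\top G(0)\phi<\phi^\top M^{-1}\phi \quad \text{if } \phi\ne 0.
\]
On the other hand, the sector bound $0\le \phi_i(y_i)\le \mu_i y_i$ with the same sign as $y_i$ gives $\phi_i y_i\ge \phi_i^2/\mu_i$, i.e. $\phi^\top y\ge\phi^\top M^{-1}\phi$. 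This is a contradiction unless $\phi(y)=0$. If $\phi(y)=0$, then $x=0$.

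With Steps 1--4, Theorem~\ref{thm:absolute stability Hamiltonian} applies and yields asymptotic stability.

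\textbf{Main obstacle.} The hardest step is the strictness bookkeeping in Steps 3 and 4: the Schur-complement equivalence and making sure the nonstrict slope bound suffices. The key simplification is that the Step 4 contradiction uses only the strict inequality $M^{-1}>G(0)$. So sector information alone suffices there, and slope restriction is needed only for $F$ to be well defined and for the quadratic bound.
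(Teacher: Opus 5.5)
Your proposal is correct and follows the paper's route. The paper also obtains Theorem~\ref{thm:dey} as a special case: its lemma uses the same choice $F(y)=\sum_i\int_0^{y_i}\phi_i(s)\,ds$, the same Schur-complement bound for positivity of $W$, and the same $y=G(0)\phi(y)$ contradiction for $\nabla W\neq 0$, and then invokes Theorem~\ref{thm:absolute stability linear}, whose proof is exactly your Step~1 recasting with $R=-\tfrac12(AY+YA^\top)>0$ in Theorem~\ref{thm:absolute stability Hamiltonian}. (As you essentially note, only $M^{-1}-G(0)>0$ and the sector consequence of the slope bound are used; the strict inequality $\phi_i'<\mu_i$ you cite in Step~2 plays no role.)
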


We provide in the following the linear specialization of the proposed absolute stability results given in Theorem \ref{thm:absolute stability Hamiltonian}. Also, we consider a strongly strictly negative imaginary (SSNI) system, as is considered in Theorem~\ref{thm:dey} via \eqref{eq:SSNI LMI}.
\begin{theorem}\label{thm:absolute stability linear}
	Consider the positive feedback interconnection of minimal linear system $(A,B,C)$ and a memoryless nonlinearity $\phi(y)$ with $\phi(0)=0$. Suppose there exists a matrix $Y=Y^\top >0$ such that
	\begin{equation}
		AY+YA^{\top}<0, \text{ and } B = -AYC^{\top}.
	\end{equation}
Then the closed-loop system (\ref{eq:minimal linear system lure}) is absolutely stable if there exists a continuously differentiable function $F(y):\mb R^p\to \mb R$ such that $\nabla F(y) = \phi(y)$ and the function $W(x):=\frac{1}{2}x^{\top}Y^{-1}x-F(Cx)$ is positive definite with $\nabla W(x)\neq 0$ for all $x\neq 0$.
\end{theorem}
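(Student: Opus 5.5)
The plan is to reduce the statement to Theorem~\ref{thm:absolute stability Hamiltonian}. We rewrite the SSNI system $(A,B,C)$ in the input-output Hamiltonian form (\ref{eq:Hamiltonian}) with $H(x)=\frac{1}{2}x^\top Y^{-1}x$ and $C(x)=Cx$, so that $\nabla H(x)=Y^{-1}x$ and $\nabla C(x)=C$. The natural choice for the structure matrix is
\begin{equation*}
J-R:=AY,
\end{equation*}
with $J=\frac{1}{2}(AY-YA^\top)$ skew-symmetric and $R=-\frac{1}{2}(AY+YA^\top)$ symmetric. By the LMI $AY+YA^\top<0$, the matrix $R$ is positive definite, and it is constant in $x$.

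We then check that this form reproduces the linear dynamics. We have
\begin{equation*}
[J-R]\bigl(\nabla H(x)-\nabla C(x)^\top u\bigr)=AY(Y^{-1}x-C^\top u)=Ax-AYC^\top u=Ax+Bu,
\end{equation*}
where the last step uses $B=-AYC^\top$. The output equation $y=Cx$ holds trivially. So (\ref{eq:minimal linear system lure}) coincides exactly with (\ref{eq:Hamiltonian}) under $u=\phi(y)$. In addition, $H$ is positive definite because $Y>0$, and $C(0)=0$, $H(0)=0$.

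With these identifications, $W(x)=H(x)-F(C(x))=\frac{1}{2}x^\top Y^{-1}x-F(Cx)$ is exactly the function in Theorem~\ref{thm:absolute stability Hamiltonian}. The remaining hypotheses of that theorem are all either given or just verified: $\nabla F=\phi$, $W$ positive definite, $\nabla W(x)\neq 0$ for $x\neq 0$, $R>0$, and $H$ positive definite. Applying Theorem~\ref{thm:absolute stability Hamiltonian} then yields absolute stability with Lyapunov function $W$. As a sanity check, one can verify directly that $\dot W=-\nabla W^\top R\nabla W$, which is negative for $x\neq 0$.

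The only step needing care is the matrix decomposition $AY=J-R$ together with the identity $B=-AYC^\top$. The rest is a direct invocation of the earlier theorem; minimality is not actually needed for this argument.
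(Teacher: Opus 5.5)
Your proof is correct and is essentially the paper's argument. The paper computes $\dot W=\frac{1}{2}\nabla W(x)^\top(AY+YA^\top)\nabla W(x)$ inline instead of citing Theorem~\ref{thm:absolute stability Hamiltonian}, but that is exactly the computation from that theorem's proof under your identification $J=\frac{1}{2}(AY-YA^\top)$ and $R=-\frac{1}{2}(AY+YA^\top)$; you are also right that minimality is never used.
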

\begin{IEEEproof}
	We use the function $W(x)$ as a candidate Lyapunov function. We have that
	\begin{align}
		\dot W(x) =&\ x^{\top}Y^{-1}\dot x - \nabla F(Cx)^{\top}C\dot x\notag\\
		=&\ x^{\top}Y^{-1}\dot x - \phi(y)^{\top}C\dot x\notag\\
		=&\ \left(x^{\top}Y^{-1}-u^{\top}C\right)\dot x \notag\\
		=&\ \left(x^{\top}Y^{-1}-u^{\top}C\right)\left(Ax+Bu\right)\notag\\
		=&\ \left(x^{\top}Y^{-1}-u^{\top}C\right)\left(Ax-AYC^{\top}u\right)\notag\\
		=&\ \frac{1}{2}\left(x^{\top}Y^{-1}-u^{\top}C\right)(AY+YA^{\top})\left(Y^{-1}x-C^{\top}u\right)\notag\\
		\leq &\ 0.
		\end{align}
Since $AY+YA^\top<0$, we have $\dot W(x)<0$ whenever
$Y^{-1}x-C^\top u\neq 0$. Moreover,
\begin{equation}
	Y^{-1}x-C^{\top}u = Y^{-1}x-C^{\top}\phi(Cx)=\nabla W(x),
\end{equation}
which is nonzero for all $x\neq 0$ by assumption. Therefore, $\dot W(x)<0$ for all $x\neq 0$, and the origin is asymptotically stable.
\end{IEEEproof}

We show in the following that the conditions in Theorem \ref{thm:dey} are a special case of Theorem \ref{thm:absolute stability linear}.
\begin{lemma}
Consider a minimal linear system $(A,B,C)$ with $B=-AYC^\top$ where $Y=Y^\top > 0$. 
Let $\phi(y)=[\phi_1(y_1),\cdots,\phi_p(y_p)]^\top$ satisfy
$0\le \phi_i'(y_i)<\mu_i$ for all $i=1,\dots,p$.
Let $M=\mathrm{diag}(\mu_1,\ldots,\mu_p)>0$ denote the matrix of maximum slope bounds. Suppose $M^{-1}-G(0)>0$, then the function
$F(y)=\sum_i\int_0^{y_i}\phi_i(s)\,ds$ is such that $W(x)=\tfrac12 x^\top Y^{-1}x - F(Cx)$ is positive definite and $\nabla W(x)\neq 0$ for all $x\neq0$.
\end{lemma}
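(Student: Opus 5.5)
The plan is to reduce both claims to the single matrix inequality $M^{-1}-CYC^\top>0$, and then exploit the sector/slope bounds on each $\phi_i$.

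\textbf{Rewriting the hypothesis.} Since $AY+YA^\top<0$ with $Y>0$, the matrix $A$ is Hurwitz and hence invertible. Using $B=-AYC^\top$,
\begin{equation*}
G(0)=-CA^{-1}B=CA^{-1}AYC^\top=CYC^\top .
\end{equation*}
So the hypothesis reads $M^{-1}-CYC^\top>0$. Consider the block matrix
\begin{equation*}
\begin{bmatrix} Y^{-1} & C^\top\\ C & M^{-1}\end{bmatrix}.
\end{equation*}
Its Schur complement with respect to the (positive definite) block $Y^{-1}$ is $M^{-1}-CYC^\top$. By the Schur complement lemma the block matrix is therefore positive definite. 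Taking the Schur complement instead with respect to $M^{-1}>0$ then gives $Y^{-1}-C^\top MC>0$, which is the form needed next.

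\textbf{Positive definiteness of $W$.} Fix $i$ and take $s>0$. Since $\phi_i(0)=0$ and $0\le\phi_i'<\mu_i$,
\begin{equation*}
0\le\phi_i(s)=\int_0^s\phi_i'(r)\,dr\le\mu_i s,
\end{equation*}
and symmetrically $\mu_i s\le\phi_i(s)\le 0$ for $s<0$. Integrating gives $F_i(y_i):=\int_0^{y_i}\phi_i(s)\,ds\le\tfrac12\mu_i y_i^2$, and summing over $i$ gives $F(y)\le\tfrac12 y^\top My$. Hence
\begin{equation*}
W(x)\ge\tfrac12 x^\top\bigl(Y^{-1}-C^\top MC\bigr)x,
\end{equation*}
which is positive for $x\neq0$ by the previous step. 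Also $W(0)=0$, so $W$ is positive definite. By construction $\nabla F=\phi$, as Theorem~\ref{thm:absolute stability linear} requires.

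\textbf{Nonvanishing gradient.} This is the step that needs a genuine argument beyond the Schur complement. We have $\nabla W(x)=Y^{-1}x-C^\top\phi(Cx)$. Suppose $\nabla W(x)=0$ for some $x\neq0$. Then $x=YC^\top w$ with $w:=\phi(Cx)$.
\begin{itemize}
\item If $w=0$, then $x=0$, a contradiction. So $w\neq0$.
\item Setting $y=Cx$, we get $y=CYC^\top w$ and hence $w^\top y=w^\top CYC^\top w$.
\end{itemize}
From the sector bound $\phi_i(y_i)/y_i\in[0,\mu_i)$ for $y_i\neq0$ (and $\phi_i(0)=0$), we obtain $\phi_i(y_i)y_i\ge\phi_i(y_i)^2/\mu_i$ componentwise. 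Summing gives $w^\top y\ge w^\top M^{-1}w$. Combining with the previous bullet,
\begin{equation*}
w^\top\bigl(M^{-1}-CYC^\top\bigr)w\le 0 .
\end{equation*}
This contradicts $M^{-1}-CYC^\top>0$ because $w\neq0$. Therefore $\nabla W(x)\neq0$ for all $x\neq0$, completing the proof.
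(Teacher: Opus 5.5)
Your proposal is correct and follows the paper's proof essentially step for step. Both rewrite $G(0)=CYC^\top$ and use the Schur complement to get $Y^{-1}-C^\top MC>0$, combined with $F(y)\le\tfrac12 y^\top My$, for positive definiteness; both then use $\phi(y)^\top y\ge\phi(y)^\top M^{-1}\phi(y)$ to rule out nonzero critical points of $W$. The one small liberty is that you justify invertibility of $A$ via $AY+YA^\top<0$, which is not a hypothesis of this lemma as stated. Invertibility nonetheless follows from the lemma's own assumptions: any $w\neq0$ with $w^\top A=0$ would give $w^\top B=-w^\top AYC^\top=0$, contradicting controllability of the minimal pair $(A,B)$. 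The paper's proof uses $A^{-1}$ tacitly.
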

\begin{IEEEproof}
The slope bounds imply $F(y)\le\tfrac12 y^\top M y$.  
Moreover, since $G(0)=-CA^{-1}B=CYC^\top$, the condition
$M^{-1}-G(0)>0$ gives $M^{-1}-CYC^\top>0$. By the Schur complement, this is
equivalent to $Y^{-1}-C^\top M C>0$. Therefore, we have that
$W(x)\ge\tfrac12 x^\top(Y^{-1}-C^\top M C)x>0$ for all $x\neq 0$. It remains to show that $\nabla W(x)\neq 0$ for all $x\neq 0$. Suppose, for
contradiction, that $\nabla W(x)=0$. Then $Y^{-1}x=C^\top\phi(Cx)$. Let $y=Cx$. It follows that $y=CYC^\top\phi(y)=G(0)\phi(y)$. Multiplying both sides by $\phi(y)^\top$ gives $\phi(y)^\top y=\phi(y)^\top G(0)\phi(y)$. From $0\leq \phi_i'(s)<\mu_i$ and $\phi_i(0)=0$, we have $\phi_i(y_i)y_i\geq \mu_i^{-1}\phi_i(y_i)^2$ for each $i$. Hence $\phi(y)^\top y\geq \phi(y)^\top M^{-1}\phi(y)$. Since $M^{-1}-G(0)>0$, the equality above is impossible unless $\phi(y)=0$. Thus $y=0$, and $\nabla W(x)=0$ then implies $x=0$. Therefore, $\nabla W(x)\neq 0$ for all $x\neq 0$.
\end{IEEEproof}

\begin{remark}
In mechanical systems with multiple measured coordinates, the diagonal nonlinearity considered in Theorem~\ref{thm:dey} corresponds to virtually adjusting springs only between each measured coordinate and the fixed ground. Such a restriction does not allow for virtual coupling between two measured coordinates, thereby excluding many useful interconnection patterns. Moreover, the requirement of a positive gradient ($0\le\phi_i'(y_i)$) limits the feedback to ``softening'' the existing springs, while neglecting other physically meaningful cases such as stiffening or reshaping the potential field. In contrast, our results admit coupled and sign-indefinite potential change $F(y)$, allowing virtual springs between any measured coordinates and enabling more flexible energy shaping, which is important in applications that require coordinated or compliant multi-point interaction.

A subsequent comment paper \cite{carrasco2017comment} generalized \cite{dey2016absolute} by relaxing the slope-restricted condition on the nonlinearity to a sector-bounded condition. It also discussed the case of negative feedback, which corresponds to the situation where $F(y)$ is negative definite in our formulation. Nevertheless, the nonlinearities in \cite{carrasco2017comment} remain diagonal, and sign-indefinite $F(y)$ were not considered. Hence, both \cite{dey2016absolute,carrasco2017comment} represent special cases of the results developed in this paper, which offer broader applicability even within the linear setting.
\end{remark}

\section{Examples}\label{sec:examples}
In this section, we provide two examples to illustrate the proposed results. In Section \ref{sec:example linear}, we consider a linear example, for comparison with results in \cite{dey2016absolute} and \cite{carrasco2017comment}. In Section \ref{sec:example nonlinear}, we consider a nonlinear example and use the proposed results to shape its potential energy.
\subsection{Linear example -- comparison with existing results}\label{sec:example linear}
In this example, we consider the interconnection of a linear SSNI system and two types of memoryless nonlinearities that cannot be analyzed using the results in \cite{dey2016absolute} and \cite{carrasco2017comment}. However, these two cases can be dealt with using the results in the present paper.

Consider the minimal linear system \eqref{eq:minimal linear system lure} with
\begin{equation}\label{eq:example linear ABC}
	A=\begin{bmatrix}-1&0\\0&-2\end{bmatrix},\quad B=\begin{bmatrix}1&0\\0&2\end{bmatrix}, \quad 
C=I_2.
\end{equation}
With $Y=I_2$, we have $AY+YA^\top=2A<0$ and $B=-AYC^\top$. Consider a memoryless nonlinear feedback $u = \phi(y)$, where two different functions $\phi$ are considered in the following.

\paragraph*{(a) Diagonal nonlinearity, with sign indefinite $F(y)$}

Let $\phi_1(y) = \begin{bmatrix}
	0.2 y_1 & -0.5y_2
\end{bmatrix}^\top $. Such a $\phi_1$ does not satisfy the slope-restriction in \cite{dey2016absolute} or the sector-boundedness in \cite{carrasco2017comment}. However, $\phi_1(y)$ satisfies the conditions in Theorem \ref{thm:absolute stability linear} with the function $F_1(y) = 0.1 y_1^2-0.25 y_2^2$. According to Theorem \ref{thm:absolute stability linear}, we construct the candidate Lyapunov function as
$$W_1(x)=0.4x_1^2+0.75x_2^2,$$
which is positive definite. Hence the closed-loop system is asymptotically stable by Theorem~\ref{thm:absolute stability linear}. The asymptotic stability can be verified by observing the closed-loop state equation $\dot x = A_{cl}x$, where $A_{cl} = \begin{bmatrix}
	-0.8&0\\0&-3
\end{bmatrix}$ is Hurwitz.

\paragraph*{(b) Coupled nonlinearity}
Let
\begin{equation}\label{eq:example linear phi_2}
	\phi_2(y) = \begin{bmatrix}
	\sin(y_2-y_1)& \sin(y_1-y_2)
\end{bmatrix}^\top.
\end{equation}
Such a $\phi_2$ has cross-coupling between channels, and is not covered by \cite{dey2016absolute} and \cite{carrasco2017comment}. However, $\phi_2(y)$ satisfies the conditions in Theorem \ref{thm:absolute stability linear} with the function $F_2(y)=\cos(y_1-y_2)-1$. According to Theorem \ref{thm:absolute stability linear}, we construct the candidate Lyapunov function as
$$W_2(x) = 0.5x_1^2+0.5x_2^2+1-\cos(x_1-x_2),$$
which is positive definite. Hence, the closed-loop system is asymptotically stable by Theorem \ref{thm:absolute stability linear}. Fig.~\ref{fig:example linear} shows the state trajectories of the closed-loop system with initial state $\begin{bmatrix}
	1&-2
\end{bmatrix}^\top$.

\begin{figure}[h!]
\centering
\psfrag{State trajectories}{\hspace{-0.3cm}\textbf{State Trajectories}}
\psfrag{State}{State}
\psfrag{Time (s)}{Time ($\mathrm{s}$)}
\psfrag{x1}{\scriptsize$x_1$}
\psfrag{x2}{\scriptsize$x_2$}
\psfrag{1}{\scriptsize$1$}
\psfrag{0}{\scriptsize$0$}
\psfrag{-1}{\hspace{-1.23mm}\scriptsize$-1$}
\psfrag{-2}{\hspace{-1.23mm}\scriptsize$-2$}
\psfrag{0.5}{\scriptsize$0.5$}
\psfrag{1.5}{\scriptsize$1.5$}
\psfrag{2}{\scriptsize$2$}
\psfrag{2.5}{\scriptsize$2.5$}
\psfrag{3}{\scriptsize$3$}
\includegraphics[width=8.5cm]{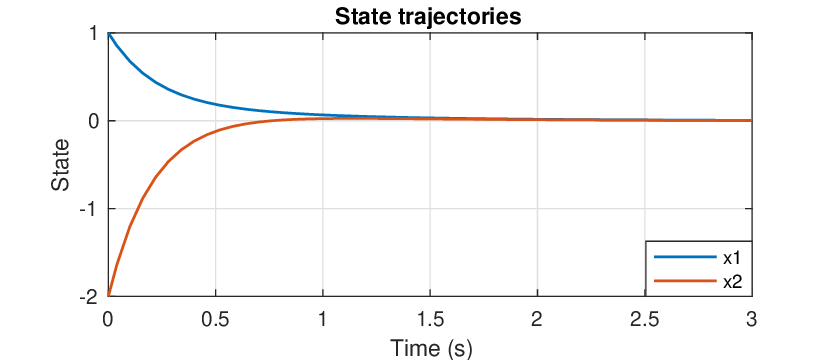}
\caption{State trajectories of the system \eqref{eq:minimal linear system lure} with (\ref{eq:example linear ABC}) and (\ref{eq:example linear phi_2}).}
\label{fig:example linear}
\end{figure}
\subsection{Nonlinear example -- potential energy shaping}\label{sec:example nonlinear}
Consider a mechanical plant consisting of two pendulums hinged at a common pivot, as illustrated in Fig.~\ref{fig:two_pendulums}. The pendulums swing in adjacent, parallel planes, separated sufficiently to prevent contact. Each pendulum \(i \in \{1,2\}\) has mass \(m_i\), rod length \(\ell_i\), angular displacement \(\theta_i\), and angular velocity \(\omega_i\). Each pendulum is influenced by gravity, a torsional spring with stiffness \(k_i\), a viscous damper with coefficient \(d_i\), and an external control torque \(u_i\). The two pendulums are additionally coupled through a torsional spring \(k_c\) and a damper \(d_c\) acting at the common pivot. The system parameters are given as follows, with all quantities expressed in standard SI units:
\begin{align}
&m_1=2, m_2=1.5, k_1=2, k_2=1,  \ell_1=1, \ell_2=1, d_1=0.5,\notag\\
&d_2=0.8, k_c=0.2, d_c=1.5, g\approx 9.81.
\end{align}
\begin{figure}[h!]
\centering
\begin{tikzpicture}
    \pgfmathsetmacro{\Gvec}{1}
    \pgfmathsetmacro{\angOne}{15}     
    \pgfmathsetmacro{\angTwo}{60}    
    \pgfmathsetmacro{\Lone}{3.0}      
    \pgfmathsetmacro{\Ltwo}{3.0}      

    \coordinate (O) at (0,0);
    \draw[dashed,gray,line width=0.6mm] (O) -- ++(0,-4) coordinate (ref);
    \filldraw[fill=black!60] (O) circle (0.1);

    \coordinate (bob1) at ($(O)+(270+\angOne:\Lone)$);
    \draw[line width=0.5mm] (O) -- (bob1);
    \filldraw[fill=red!50,draw=black] (bob1) circle (0.15);
    \pic [draw,-stealth,"$\theta_1$",angle eccentricity=1.2,angle radius=1.5cm,line width=0.2mm]
        {angle = ref--O--bob1};
    \node at ($(bob1)+(0.5,-0.2)$) {$m_1$};

    \coordinate (bob2) at ($(O)+(270+\angTwo:\Ltwo)$);
    \draw[line width=0.5mm] (O) -- (bob2);
    \filldraw[fill=red!50,draw=black] (bob2) circle (0.15);
    \pic [draw,-stealth,"$\theta_2$",angle eccentricity=1.3,angle radius=0.9cm,line width=0.2mm]
        {angle = ref--O--bob2};
    \node at ($(bob2)+(0.5,-0.2)$) {$m_2$};
\end{tikzpicture}
\caption{Two pendulums with hinge springs $k_1,k_2$, hinge dampers $d_1,d_2$, coupling spring $k_c$, coupling damper $d_c$, gravity $g$, and input torques $u_1,u_2$.}
\label{fig:two_pendulums}
\end{figure}

The state-space model of the plant is given by
\begin{subequations}
	\begin{align}
		\dot \theta_1 =&\ \omega_1,\\
		\dot \theta_2 =&\ \omega_2,\\
		\dot \omega_1 =&\ \frac{1}{m_1 \ell_1^2}\Big(-m_1g\ell_1\sin\theta_1-k_1\theta_1-d_1\omega_1\notag\\
		&-k_c(\theta_1-\theta_2)-d_c(\omega_1-\omega_2)+u_1\Big),\\
		\dot \omega_2 =&\ \frac{1}{m_2 \ell_2^2}\Big(-m_2g\ell_2\sin\theta_2-k_2\theta_2-d_2\omega_2\notag\\
		&+k_c(\theta_1-\theta_2)+d_c(\omega_1-\omega_2)+u_2\Big),\\
		y=&\ \begin{bmatrix}
			\theta_1 & \theta_2
		\end{bmatrix}^\top.
	\end{align}
\end{subequations}
Such a pendulum system is OSNI as can be verified using Definition \ref{def:nonlinear OSNI} with storage function
\begin{align}
	V(&\theta_1,\theta_2,\omega_1,\omega_2) = \frac{1}{2}k_c(\theta_1-\theta_2)^2\notag\\
	&+\sum_i\Big(\frac{1}{2}k_i\theta_i^2+\frac{1}{2} m_i\ell_i^2 \omega_i^2+m_ig\ell_i(1-\cos\theta_i)\Big).
\end{align}
The equilibrium at the origin is asymptotically stable. However, because of the gravitational forces, the region of attraction is not the entire state space. Consequently, the pendulums may settle at other nonzero equilibria unless initialized close to the origin. In the following, we use this model to illustrate how we apply the proposed results to shape the potential energy of the system in order to
\begin{enumerate}
\item stiffen the coupling between the pendulums so that $\theta_1$ and $\theta_2$ remain nearly synchronized during oscillatory motion under new external input;
\item enlarge the region of attraction to be the full state space so that the $\theta_1$ and $\theta_2$ converge to the origin rather than other nonzero equilibria.
\end{enumerate}
To enforce synchronization between the two pendulums, we introduce additional virtual potential–energy terms that penalize the relative displacement $e := \theta_1 - \theta_2$.  
Let the new total storage function be
\begin{equation}
	V_1 = V + \beta e^2 + \kappa\big(\sqrt{e^2+\delta^2}-\delta\big),
\end{equation}
where $\beta = 1.5$, $\kappa = 5$, and $\delta = 0.1$.  

The quadratic term $\beta e^2$ acts as a virtual linear torsional spring coupling the two pendulums. However, the coupling effect is weak for small $|e|$. Therefore, we also include the nonlinear term $\kappa(\sqrt{e^2+\delta^2}-\delta)$ which reshapes the potential energy to be steep near $e=0$ and gradually flatter for large $|e|$. Together, these two terms provide effective coupling without excessive stiffness. According to Theorem \ref{theorem:NI preservation}, to achieve the storage function $V_1$, we apply the control input $u = \phi_1(y)+v$ where
\begin{equation}
	\phi_1(y) = \nabla F_1(y) = \nabla_y (V-V_1) = \begin{bmatrix}
		-2\beta e-\kappa \frac{e}{\sqrt{e^2+\delta^2}}\\2\beta e+\kappa \frac{e}{\sqrt{e^2+\delta^2}}
	\end{bmatrix}.
\end{equation}
To illustrate the effect of the enhanced coupling, we simulate the trajectories of $\theta_1$ and $\theta_2$ for both the original and the shaped systems.  
To highlight the synchronization behavior, for both the original and shaped system, a square–wave torque is applied only to pendulum $m_1$, with amplitude $2$ and period $3~\mathrm{s}$; that is,
\[
v_1(t) = 2\,\mathrm{sgn}\!\left(\sin\!\left(\tfrac{2\pi}{3}t\right)\right), \qquad v_2(t)=0.
\]
For both systems, we set the initial angular displacements to be
\begin{equation}\label{eq:example initial conditions}
	\begin{bmatrix}\theta_1 & \theta_2\end{bmatrix}^\top = \begin{bmatrix}6 & 4.5\end{bmatrix}^\top.
\end{equation}  
As shown in Fig.~\ref{fig:example nonlinear enhanced coupling}, the pendulums in the original system exhibit noticeable phase and amplitude differences, whereas the shaped system achieves rapid synchronization, where $\theta_1-\theta_2$ quickly decreases and remains small while both pendulums continue to oscillate under the external forcing.

\begin{figure}[h!]
\centering
\psfrag{theta1 and theta2 and their differences}{\hspace{0.4cm}\textbf{Trajectories of $\theta_1$, $\theta_2$ and $\theta_1-\theta_2$}}
\psfrag{Original}{Original}
\psfrag{Enhanced Coupling}{Enhanced Coupling}
\psfrag{Time}{Time ($\mathrm{s}$)}
\psfrag{th1}{$\theta_1$}
\psfrag{th2}{$\theta_2$}
\psfrag{difference}{$\theta_1-\theta_2$}
\psfrag{x2}{\scriptsize$x_2$}
\psfrag{-2}{\hspace{-1.23mm}\scriptsize$-2$}
\psfrag{0}{\scriptsize$0$}
\psfrag{2}{\scriptsize$2$}
\psfrag{4}{\scriptsize$4$}
\psfrag{6}{\scriptsize$6$}
\psfrag{8}{\scriptsize$8$}
\psfrag{5}{\scriptsize$5$}
\psfrag{10}{\scriptsize$10$}
\psfrag{15}{\scriptsize$15$}
\psfrag{20}{\scriptsize$20$}
\psfrag{25}{\scriptsize$25$}
\psfrag{30}{\scriptsize$30$}
\includegraphics[width=9.5cm]{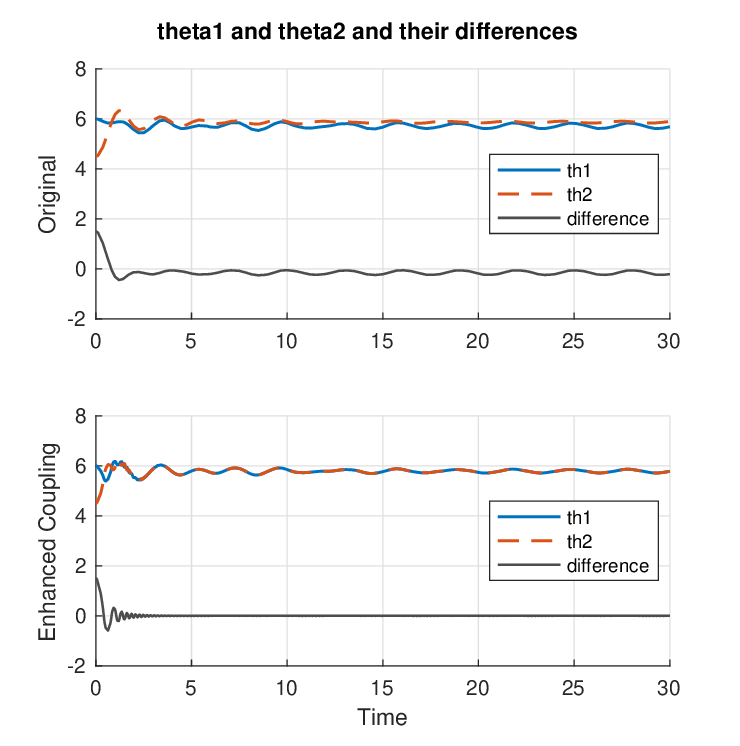}
\caption{Angular displacements of the two pendulums and their relative difference for the original plant and the shaped plant with enhanced coupling. A square-wave torque is applied to pendulum $m_1$ to illustrate the synchronization effect.}
\label{fig:example nonlinear enhanced coupling}
\end{figure}

It should be noted that although $\theta_1$ and $\theta_2$ can be synchronized, they oscillate around a nonzero angle rather than the origin. This occurs because the region of attraction does not cover the entire state space, and multiple local minima exist in the potential energy. To enlarge the region of attraction and ensure convergence to the unique equilibrium at the origin, we further reshape the potential energy by adding a term that eliminates the undesired local minima in the storage function. Specifically, let
\begin{equation}\label{eq:V2}
	V_2 = V_1 + \sum_i a\,\log(\cosh(b\,\theta_i)),
\end{equation}
where we choose the parameters $a=5$ and $b=3$. Here, $\sum_i a\,\log(\cosh(b\,\theta_i))$ is a positive-definite function that steepens the potential near the origin. This shaping term effectively flattens the distant wells created by the gravitational terms and yields a single-well potential energy surface with a global minimum at the origin. This can also be observed in Fig.~\ref{fig:potential shaping}, which compares the potential energy surface of the original system and the shaped system.

To obtain the shaped potential energy $V_2$ as given in (\ref{eq:V2}), we can apply the input
\begin{align}
\phi_2(y) =&\ \nabla F_2(y) = \nabla_y (V-V_2)\notag\\
=& \begin{bmatrix}
		-2\beta e-\kappa \frac{e}{\sqrt{e^2+\delta^2}}-ab\tanh(b\theta_1)\\2\beta e+\kappa \frac{e}{\sqrt{e^2+\delta^2}}-ab\tanh(b\theta_2)
	\end{bmatrix}.\label{eq:phi2}
\end{align}

\begin{figure}[h!]
\centering
\psfrag{Potential energy surfaces}{\textbf{Potential Energy Surfaces}}
\psfrag{Original potential}{\hspace{-10mm}Original potential $V(\theta_1,\theta_2,0,0)$}
\psfrag{Shaped potential}{\hspace{-10mm}Shaped potential $V_2(\theta_1,\theta_2,0,0)$}
\psfrag{theta1}{\hspace{3mm}$\theta_1$}
\psfrag{theta2}{\hspace{3mm}$\theta_2$}
\includegraphics[width=9.5cm]{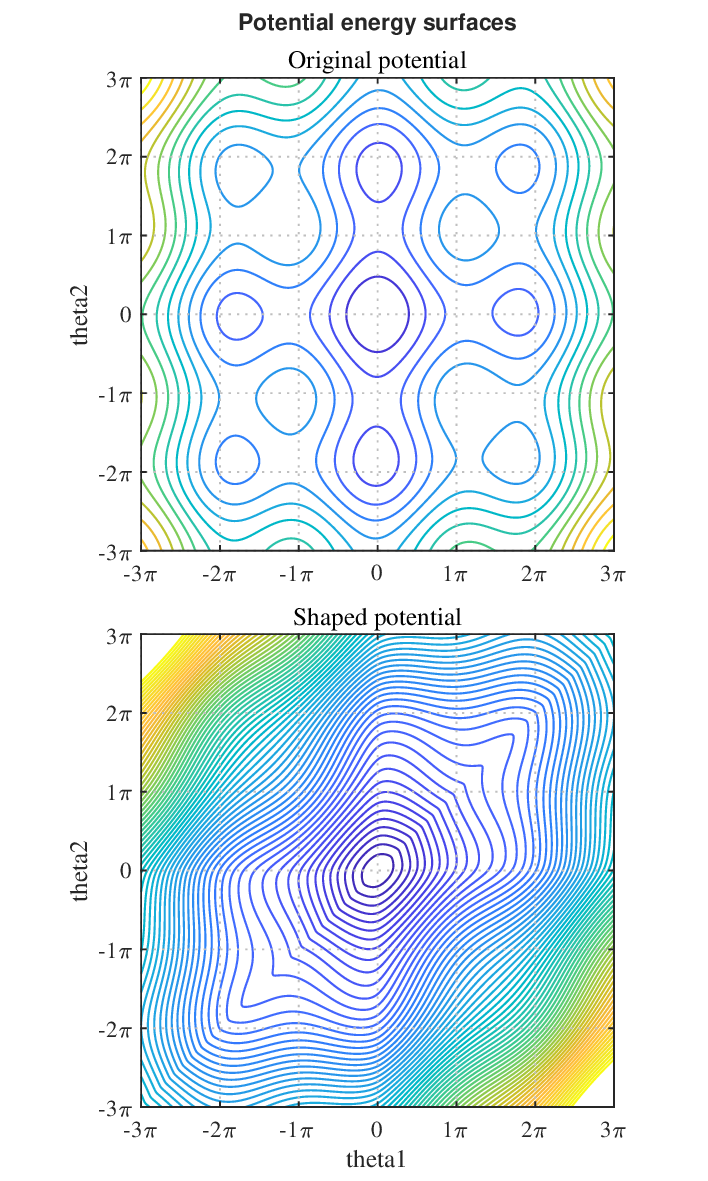}
\caption{Potential energy surfaces of the system before and after potential shaping. The original potential exhibits multiple local minima due to gravitational and coupling effects, while the shaped potential has only a minimum at the origin.}
\label{fig:potential shaping}
\end{figure}

Now we simulate the motion of the system for the plant with the same initial conditions (\ref{eq:example initial conditions}) and under $u=\phi_2(y)$. It can be observed that unlike the case in Fig.~\ref{fig:example nonlinear enhanced coupling}, the angular displacements $\theta_1$ and $\theta_2$ now converge to zero. Moreover, Fig.~\ref{fig:state trajectories} shows asymptotic stability of the system, which aligns with the results in Theorem \ref{thm:absolute stability}.

\begin{figure}[h!]
\centering
\psfrag{State Trajectories}{\textbf{State Trajectories}}
\psfrag{State}{State}
\psfrag{Time}{Time ($\mathrm{s}$)}
\psfrag{th1}{\small$\theta_1$}
\psfrag{th2}{\small$\theta_2$}
\psfrag{om1}{\small$\omega_1$}
\psfrag{om2}{\small$\omega_2$}
\psfrag{10}{\scriptsize$10$}
\psfrag{5}{\scriptsize$5$}
\psfrag{0}{\scriptsize$0$}
\psfrag{-5}{\hspace{-1.3mm}\scriptsize$-5$}
\psfrag{-10}{\hspace{-1.2mm}\scriptsize$-10$}
\psfrag{-15}{\hspace{-1.25mm}\scriptsize$-15$}
\psfrag{20}{\scriptsize$20$}
\psfrag{30}{\scriptsize$30$}
\psfrag{40}{\scriptsize$40$}
\psfrag{50}{\scriptsize$50$}
\includegraphics[width=9.5cm]{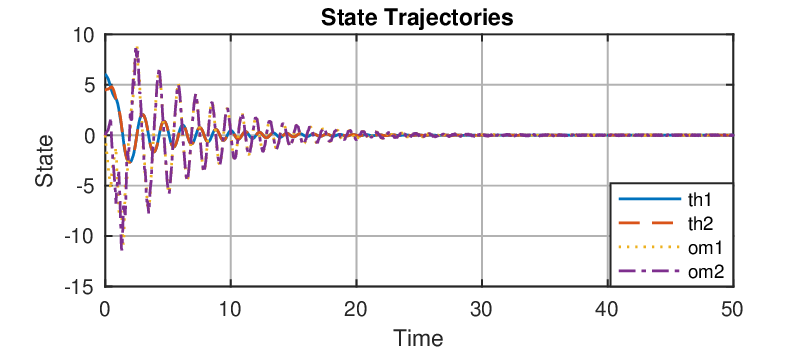}
\caption{State trajectories of the plant under input $u=\phi_2(y)$ with $\phi_2$ given by (\ref{eq:phi2}).}
\label{fig:state trajectories}
\end{figure}

\section{Conclusion}\label{sec:conclusion}
We provide conditions under which the interconnection of a nonlinear NI system and a static nonlinearity remains NI. Also, we provide an interpretation of the result from the perspective of storage function shaping. Based on this result, we derived absolute stability conditions for nonlinear NI systems with some level of strictness. Also, we show that the linear specialization of the results is strictly more general than existing absolute stability results for linear NI systems. This is then illustrated using a linear example. A nonlinear pendulum example is also studied to illustrate the utility of the proposed results in potential energy shaping.
\bibliographystyle{IEEEtran}
\bibliography{reference.bib}

\end{document}